\documentclass{article}  


\usepackage{arxiv}
\usepackage{amsthm}
\usepackage[utf8]{inputenc}
\usepackage{amsmath}
\usepackage{amsfonts}
\usepackage{amssymb}
\usepackage[all]{xy}

\usepackage{xcolor}  






\usepackage{amssymb}

\usepackage{amsmath}
\usepackage{mathdots}

\usepackage{mathtools}


\usepackage{tensor}

\usepackage{urwchancal}
\DeclareFontFamily{OT1}{pzc}{}
\DeclareFontShape{OT1}{pzc}{m}{it}{<-> s * [1.10] pzcmi7t}{}
\DeclareMathAlphabet{\mathpzc}{OT1}{pzc}{m}{it}

\usepackage{graphicx}
\usepackage{ifpdf}
\ifpdf
\usepackage{epstopdf}
\epstopdfsetup{update,prepend}
\PrependGraphicsExtensions{.svg}
\DeclareGraphicsRule{.svg}{pdf}{.pdf}{
  `inkscape -z -D #1 --export-pdf=\noexpand\OutputFile
}
\fi

\newtheorem{theorem}{Theorem}[section]
\newtheorem{lemma}[theorem]{Lemma}

\newtheorem{proposition}[theorem]{Proposition}





\providecommand{\R}{\mathbb{R}}


\providecommand{\SO}{\mathbf{SO}}

\providecommand{\MR}{\mathbf{MR}}

\providecommand{\grpG}{\mathbf{G}}


\providecommand{\gothg}{\mathfrak{g}}

\providecommand{\gothX}{\mathfrak{X}} 



\providecommand{\Sph}{\mathrm{S}}

\providecommand{\calM}{\mathcal{M}}
\providecommand{\calN}{\mathcal{N}}

\providecommand{\calV}{\mathcal{V}}
\providecommand{\calV}{\mathcal{W}}






\providecommand{\vecV}{\mathbb{V}}

\providecommand{\calV}{\mathcal{V}}



\providecommand{\Sym}{\mathbb{S}} 

\providecommand{\tT}{\mathrm{T}} 













\providecommand{\calf}{\mathpzc{f}} 



\DeclareMathOperator{\diag}{diag}
\DeclareMathOperator{\stab}{stab}
\DeclareMathOperator{\Ad}{Ad}

\DeclareMathOperator{\image}{im}


\providecommand{\id}{\mathrm{id}} 





\providecommand{\td}{\mathrm{d}}
\providecommand{\tD}{\mathrm{D}}

\providecommand{\ddt}{\frac{\td}{\td t}}






\providecommand{\mr}[1]{{{#1}^\circ}} 

\usepackage{accents}
\makeatletter
\providecommand{\scirc}{%
    \hbox{\fontfamily{\rmdefault}\fontsize{0.4\dimexpr(\f@size pt)}{0}\selectfont{\raisebox{-0.52ex}[0ex][-0.52ex]{$\circ$}}}}

\makeatother

\mathchardef\mhyphen="2D










\providecommand{\etal}{\textit{et al.}~}



\renewcommand{\mr}[1]{#1^\circ}

\usepackage{hyperref}

\providecommand{\tT}{\mathrm{T}}


\begin{document}






\newcommand{\publicationdetails}
{
  \copyrightNoticeIEEE{2019}
  P. v. Goor, R. Mahony and T. Hamel, "Equivariant Filter (EqF): A General Filter Design for Systems on Homogeneous Spaces," \textit{2020 IEEE 59th Conference on Decision and Control (CDC)}, 2020, pp. 5401-5408, IEEE
  \DOILink{https://ieeexplore.ieee.org/abstract/document/9303813}{10.1109/CDC42340.2020.9303813}
}
\newcommand{\publicationversion}
{Author accepted version}

\title{Equivariant Filter (EqF): A General Filter Design for Systems on Homogeneous Spaces}
\headertitle{Equivariant Filter (EqF): A General Filter Design for Systems on Homogeneous Spaces}

\author{
\href{https://orcid.org/0000-0003-4391-7014}{\includegraphics[scale=0.06]{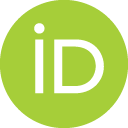}\hspace{1mm}
Pieter van Goor} \\
Department of Electrical, Energy and Materials Engineering\\
Australian National University\\
  ACT, 2601, Australia \\
\texttt{Pieter.vanGoor@anu.edu.au}
\\ \And	
\href{https://orcid.org/0000-0002-7779-1264}{\includegraphics[scale=0.06]{orcid.png}\hspace{1mm}
Tarek Hamel} \\
I3S (University C\^ote d'Azur, CNRS, Sophia Antipolis)\\
and Insitut Universitaire de France\\
\texttt{THamel@i3s.unice.fr}
\\ \And 
\href{https://orcid.org/0000-0002-7803-2868}{\includegraphics[scale=0.06]{orcid.png}\hspace{1mm}
Robert Mahony} \\
Department of Electrical, Energy and Materials Engineering\\
Australian National University\\
  ACT, 2601, Australia \\
\texttt{Robert.Mahony@anu.edu.au}
}

\maketitle

\begin{abstract}
The kinematics of many mechanical systems encountered in robotics and other fields,
such as single-bearing attitude estimation and SLAM,
are naturally posed on homogeneous spaces: That is, their state lies in a smooth manifold equipped with a transitive Lie-group symmetry.
This paper shows that any system posed in a homogeneous space can be extended to a larger system that is equivariant under a symmetry action.
The equivariant structure of the system is exploited to propose a novel new filter, the Equivariant Filter (EqF), based on linearisation of global error dynamics derived from the symmetry action.
The EqF is applied to an example of estimating the positions of stationary landmarks relative to a moving monocular camera that is intractable for previously proposed symmetry based filter design methodologies.
\end{abstract}


\section{Introduction}


The importance of Lie group symmetries in analysing non-linear systems has been recognised since the 1970s \cite{1972_brocket_SIAM,1972_jurdjevic_JDE,1973_brocket_SIAM}.
Jurdjevic and Sussmann generalised the ideas of Brockett \cite{1972_brocket_SIAM} to abstract Lie groups \cite{1972_jurdjevic_JDE}.
Cheng \etal obtained necessary and sufficient conditions for observability
\cite{1990_cheng_SIAM}.
A comprehensive discussion of these early results can be found in Chapter 6 of Jurdjevic's book \cite{1997_jurdjevic_geometric_control}.

In one of the earliest works applying Lie group symmetry for observer design for an explicit example, Salcudean \cite{1991_salcudean_TAC} proposed a non-linear observer for attitude estimation of a satellite using the quaternion representation of rotation.
Thienel \etal  \cite{2003_thienel_TAC} added an analysis of observability and  bias estimation.
In parallel work, Aghannan proposed a general observer design methodology for Lagrangian systems by exploiting invariance properties \cite{2003_aghannan_TAC}.
Driven by the emerging aerial robotics community and the need for robust and simple attitude observers Mahony \etal \cite{2008_Mahony_tac} developed a non-linear observer for rotation matrices posed directly on the matrix Lie group $\SO(3)$ with almost-global convergence properties.
In parallel, Bonnabel \etal \cite{2007_Bonnabel_cdc} proposed the left-invariant extended Kalman filter and applied this to attitude estimation.
Both these observers are fundamentally derived from the symmetry properties of the underlying system and had significant impact in the robotics community.
This motivated more general studies of systems on Lie groups and homogeneous spaces \cite{2008_Bonnabel_TAC,2009_Bonnabel_TAC,2009_lageman_TAC}.
In \cite{2009_Bonnabel_cdc} Bonnabel \etal proposed the Invariant Extended Kalman Filter (IEKF), a sophisticated observer design for systems on Lie groups with invariance properties.
Mahony \etal considered observer design for equivariant kinematic systems on homogeneous spaces with equivariant output functions using Lyapunov design principles \cite{RM_2013_Mahony_nolcos}.
This general design was extended in \cite{2015_khosravian_automatica} to deal with biased input measurements, and to use general gain mappings.
Work by Barrau and Bonnabel \cite{2017_Barrau_tac} extended the IEKF from invariant systems to a broader class of ``group affine'' systems, and characterised the filter's convergence properties.
Recent symmetry-based observers have demonstrated how equivariant designs naturally handle invariance \cite{2016_Barrau_arxive,2017_Mahony_cdc} in a class of systems for the classical Simultaneous Localisation and Mapping (SLAM) problem in robotics.
This paper draws from recent work on observer design for systems on Lie groups \cite{2020_mahony_mtns,2020_mahony_annrev}.



In this paper, we study equivariance properties of kinematic systems on homogeneous spaces and propose a novel observer design methodology for systems with a Lie group state symmetry.
One of our key contributions is that we show that any system posed on a homogeneous space can be embedded in a natural manner into an extended system that is equivariant.
This allows the proposed observer design to be applied to any system on a homogeneous space by extending the system if necessary.
We pose the observer as a lift of the extended equivariant system on the symmetry group associated with the homogeneous space.
We go on to propose a novel observer design, the Equivariant Filter (EqF), based on linearising the dynamics of global error coordinates derived from the system symmetry and lifting the associated gain map back to the observer state system.
The approach differs from an extended Kalman filter approach by introducing global error coordinates (derived from the symmetry) and linearising the error dynamics around a fixed equilibrium, rather than linearising along the observer trajectory.
All previous (equivariant systems) observer/filter design methodologies depend on assuming properties of the system; invariance for the constructive designs \cite{RM_2013_Mahony_nolcos} or group affine structure for the IEKF designs \cite{2017_Barrau_tac}.
The EqF requires only that the system state is a homogeneous manifold with associated Lie-group symmetry.
It specialises to the Invariant Extended Kalman Filter (IEKF) \cite{2017_Barrau_tac} when the lifted system displays the specific ``group affine'' structure required for the IEKF derivation.
Unlike the IEKF, the EqF can also be applied to systems posed on homogeneous spaces rather than on Lie groups.
Moreover, it enables the user to choose local coordinates to modify the filter properties, and it fully exploits the symmetry properties of the system without requiring that the system model is posed explicitly on the Lie-group, a process that would lead to unobservable states associated with the stabiliser subgroups.
We apply the proposed filter to an example problem, 2D derotated ego-centric visual SLAM, of determining the positions of 2D landmarks with respect to a moving camera from bearing (or visual) measurements.
This example is of interest since it does not have invariant or group affine system structure and prior design methodologies do not apply.
By extending the velocity space we generate an equivariant system which is still not invariant nor group affine, but to which the proposed EqF can be applied.
The simulations demonstrate the performance of the proposed design methodology.




\section{Preliminaries}


For a smooth manifold $\calM$, let $\tT_\xi \calM$ denote the tangent space of $\calM$ at $\xi$, let $\tT\calM$ denote the tangent bundle, and let $\gothX(\calM)$ denote the space of vector fields over $\calM$.
Given a vector field $f \in \gothX(\calM)$, $f (\xi) \in \tT_\xi \calM$ denotes the value of the vector field $f$ at $\xi \in \calM$.

Given a differentiable function between smooth manifolds $h:\calM \to \calN$, the map
\begin{align*}
    \tD h(\xi): \tT_\xi \calM &\to  \tT_{h(\xi)} \calN, \\
    v &\mapsto \tD h(\xi)[v],
\end{align*}
denotes the derivative of $h$ at $\xi$.
Additionally, the map
\begin{align*}
    \td h: \tT \calM &\to \tT \calN, \\
    (\xi,v) &\mapsto (h(\xi), \tD h(\xi)[v]),
\end{align*}
denotes the differential of $h$ where the base point is implicit in the argument.
Given a function $\Lambda : \calM_1\times \cdots \times \calM_n \to \calN$, the map $\tD_{\xi^i} \Lambda : \tT_{\xi^i} \calM_i \to \tT \calN$ is defined as the partial differential of $\Lambda$ with respect to the argument $\xi^i$.



A general Lie group is denoted $\grpG$ and has Lie algebra $\gothg$.
The identity element is written $\id \in \grpG$.
For any $X \in \grpG$, the left and right translations by $X$ are denoted $L_X$ and $R_X$, respectively, and are defined by
\begin{align*}
    L_X(Y) := XY, \qquad R_X(Y) := YX
\end{align*}
where $XY$ denotes the group multiplication.
The adjoint map $\Ad: \grpG \times \gothg \to \gothg$ is defined by
\begin{align*}
    \Ad_X U = \td L_X \td R_{X^{-1}} U
\end{align*}
for every $X \in \grpG$ and $U \in \gothg$.

A right action of a Lie-group $\grpG$ on a manifold $\calM$ is a smooth map $\phi: \grpG \times \calM \to \calM$ that satisfies,
\begin{align*}
    \phi(Y, \phi(X, \xi)) = \phi(XY, \xi),
\end{align*}
for any $X,Y \in \grpG$ and any $\xi \in \calM$.
For a fixed $X \in \grpG$, the partial map $\phi_X : \calM \to \calM$ is defined by
$\phi_X(\xi) := \phi(X,\xi)$.
Likewise, for a fixed $\xi \in \calM$, the partial map $\phi_\xi : \grpG \to \calM$ is defined by
$\phi_\xi(X) := \phi(X,\xi)$.
We will assume that all symmetries in the paper are right actions, noting that any left action can be transformed to a right action by considering the inverse parameterization of the group.

\begin{proposition}
Any right action $\phi: \grpG \times \calM \to \calM$ induces a right action on vector fields over $\calM$, denoted $\td_\star \phi: \grpG \times \gothX(\calM) \to \gothX(\calM)$, and defined by
\begin{align*}
    \td_\star \phi (X, f) := \td \phi_X f \circ \phi_X^{-1},
\end{align*}
for any $f \in \gothX(\calM)$ and $X \in \grpG$.
For a fixed $X \in \grpG$, $\td_\star \phi_X$ is a linear map on $\gothX(\calM)$.
\end{proposition}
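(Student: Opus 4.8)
The plan is to verify three things in turn: that $\td_\star\phi(X,f)$ is a well-defined element of $\gothX(\calM)$, that $(X,f)\mapsto\td_\star\phi(X,f)$ obeys the right-action composition law, and that for fixed $X$ the induced map is linear. First I would record that each partial map $\phi_X$ is a diffeomorphism of $\calM$: since $\phi$ is an action we have $\phi_\id = \id_\calM$ and $\phi_Y\circ\phi_X = \phi_{XY}$, so $\phi_{X^{-1}}\circ\phi_X = \phi_X\circ\phi_{X^{-1}} = \id_\calM$ and hence $\phi_X$ is smooth with smooth inverse $\phi_X^{-1}=\phi_{X^{-1}}$. Consequently $\td_\star\phi(X,f) = \td\phi_X\circ f\circ\phi_X^{-1}$ is exactly the pushforward of $f$ by the diffeomorphism $\phi_X$. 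Evaluating at $\eta\in\calM$, the point $\phi_X^{-1}(\eta)$ is carried by $f$ to $f(\phi_X^{-1}(\eta))\in\tT_{\phi_X^{-1}(\eta)}\calM$, and $\tD\phi_X(\phi_X^{-1}(\eta))$ sends this into $\tT_{\phi_X(\phi_X^{-1}(\eta))}\calM = \tT_\eta\calM$; being a composition of smooth maps, $\td_\star\phi(X,f)$ is a smooth section of $\tT\calM$, i.e. a genuine vector field. This establishes that $\td_\star\phi$ indeed maps $\grpG\times\gothX(\calM)\to\gothX(\calM)$.

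The central step is the composition law. Writing $g := \td_\star\phi(X,f) = \td\phi_X\circ f\circ\phi_X^{-1}$ and applying the definition a second time,
\begin{align*}
\td_\star\phi(Y,\td_\star\phi(X,f)) &= \td\phi_Y\circ\big(\td\phi_X\circ f\circ\phi_X^{-1}\big)\circ\phi_Y^{-1} \\
&= (\td\phi_Y\circ\td\phi_X)\circ f\circ(\phi_X^{-1}\circ\phi_Y^{-1}).
\end{align*}
Here I would invoke functoriality of the differential (the chain rule), $\td\phi_Y\circ\td\phi_X = \td(\phi_Y\circ\phi_X)$, together with the action property $\phi_Y\circ\phi_X = \phi_{XY}$, to get $\td\phi_Y\circ\td\phi_X = \td\phi_{XY}$; dually, $\phi_X^{-1}\circ\phi_Y^{-1} = (\phi_Y\circ\phi_X)^{-1} = \phi_{XY}^{-1}$. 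Substituting yields $\td\phi_{XY}\circ f\circ\phi_{XY}^{-1} = \td_\star\phi(XY,f)$, which is precisely the required identity.

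The one point demanding care is the order of composition: it is exactly because $\phi$ is a \emph{right} action, so that $\phi_Y\circ\phi_X = \phi_{XY}$ rather than $\phi_{YX}$, that the induced map on vector fields again comes out as a right action; this is the place where a sign/ordering slip would break the result, and it is what I would treat as the main obstacle. Finally, linearity for fixed $X$ is pointwise and routine: for $f,g\in\gothX(\calM)$ and $a,b\in\R$, evaluating at $\eta$ and using that $\tD\phi_X(\phi_X^{-1}(\eta))$ is a linear map between tangent spaces gives $\td_\star\phi_X(af+bg)(\eta) = a\,\td_\star\phi_X(f)(\eta) + b\,\td_\star\phi_X(g)(\eta)$, since the vector-space operations on $\gothX(\calM)$ are defined fibrewise. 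Thus the only substantive content is the composition-order bookkeeping of the middle step and the verification that pushforward by $\phi_X$ lands back in $\gothX(\calM)$.
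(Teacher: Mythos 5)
Your proof is correct and follows essentially the same route as the paper: the composition law via the chain rule $\td\phi_Y \circ \td\phi_X = \td\phi_{XY}$ together with $\phi_X^{-1}\circ\phi_Y^{-1} = \phi_{XY}^{-1}$, and pointwise linearity from the linearity of the differential $\tD\phi_X$. The only difference is that you also spell out the (routine) verification that $\phi_X$ is a diffeomorphism and hence that the pushforward is a genuine smooth vector field, a step the paper leaves implicit.
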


\begin{proof}
To see that $\td_\star \phi$ is a group action, let $X,Y \in \grpG$ and $f \in \gothX(\calM)$.
Then,
\begin{align*}
    \td_\star \phi(Y, \td_\star \phi(X, f))
    &= \td \phi_Y \td \phi_X f \circ \phi_X^{-1} \circ \phi_Y^{-1}, \\
    &= \td \phi_{XY} f \circ \phi_{Y^{-1}X^{-1}}, \\
    &= \td_\star \phi(XY, f).
\end{align*}
To prove linearity
let $c_1,c_2 \in \R$ and let $f_1, f_2 \in \gothX(\calM)$, then
\begin{align*}
    \td_\star & \phi(X, c_1 f_1 + c_2 f_2)(\xi) \\
    &= \td \phi_X (c_1 f_1 + c_2 f_2)({\phi_X^{-1}(\xi)}), \\
    &= c_1 \td \phi_X (f_1)({\phi_X^{-1}(\xi)}) + c_2 \td \phi_X (f_2)({\phi_X^{-1}(\xi)}), \\
    &= c_1 \td_\star \phi(X, f_1)(\xi) + c_2 \td_\star \phi(X, f_2)(\xi),
\end{align*}
where the second-last line follows from the linearity of the differential $\td \phi_X$.
\end{proof}


\section{Problem Description}


Let $\calM$ be a smooth $M-dimensional$ manifold termed the \textit{state space}.
A \emph{system function} for a kinematic system on $\calM$ is a linear map
\begin{align} \label{eq:kinematic_system}
    f: \vecV &\to \gothX(\calM), \notag \\
    v &\mapsto f_v,
\end{align}
where $\vecV$ is a real vector space termed the \textit{velocity space}.
Trajectories $\xi(t) \in \calM$ on a time interval $[0,\infty)$ of the system considered are solutions of the ordinary differential equation
\begin{align}
    \dot{\xi} &= f_{v(t)}(\xi), \quad\quad\quad \xi(0) \in \calM, \label{eq:kin_system_trajectory}
\end{align}
with initial condition $\xi(0)$ and measured input signal $v(t) \in \vecV$.
We will assume $v(t)$ is sufficiently smooth to ensure unique well defined solutions for all time.
The system output function for a kinematic system is a function
\begin{align} \label{eq:measurement_system}
    h:\calM \to \calN,
\end{align}
where $\calN$ is a smooth manifold termed the \textit{output space}.
In this paper, we will assume that the output space $\calN$ is embedded in $\R^N$ for some sufficiently large $N$ to reduce the complexity of the analysis.
We will use the embedding coordinates $ h(\xi_1), h(\xi_2)  \in \calN \hookrightarrow \R^N$ explicitly in computing output errors $h(\xi_1) - h(\xi_2) \in \R^N \not\in \calN$ in the proposed filter update \eqref{eq:eqf_delta_innovation}.

Let $\grpG$ be a Lie group with Lie algebra $\gothg$, and suppose that $\calM$ is a homogeneous space of $\grpG$; that is, there exists a smooth transitive right group action of $\grpG$ on $\calM$,
\begin{align} \label{eq:phi_action}
    \phi: \grpG \times \calM \to \calM.
\end{align}
A \textit{lift} for the system \eqref{eq:kinematic_system} is a function $\Lambda: \calM \times \vecV \to \gothg$ satisfying
\begin{align} \label{eq:lift_condition}
    \tD \phi_\xi(\id)\left[ \Lambda(\xi, v) \right] &= f_v(\xi),
\end{align}
for every $\xi \in \calM$ and $v \in \vecV$ \cite{RM_2013_Mahony_nolcos}.



Any kinematic system \eqref{eq:kinematic_system} defined on a homogeneous space admits a lift $\Lambda: \calM \times \vecV \to \gothg$ satisfying \eqref{eq:lift_condition} \cite{RM_2013_Mahony_nolcos}.
In the particular case where $\tD \phi_\xi(\id)$ is invertible, corresponding to a group action where $\stab_\phi (\xi)$ is trivial or discrete, the lift is unique \cite{2020_mahony_mtns}.


\section{Equivariant Systems}


\subsection{Kinematic Equivariance}
\label{sec:kinematic_equivariance}


Consider the system function $f : \vecV \to \gothX(\calM)$.
We will assume that $\ker f = \{0\}$ is trivial, that is there is no $v \in \vecV$ such that $f_v(\xi) \equiv 0$ for all $\xi \in \calM$.
To see there is no loss of generality in this assumption, assume that $\ker f \neq \{0\}$.
Consider a new input vector space $\vecV' = \vecV / \ker f$ and a new system function $f'  : \vecV' \to \gothX(\calM)$ with $f'_{\{u + \ker f\}} = f_{u}$.
It is easily verified that $f'$ is well defined, linear in the new input variable, and has trivial kernel.

A kinematic system \eqref{eq:kinematic_system} is termed \textit{equivariant} if there exists a right group action $\psi: \grpG \times \vecV \to \vecV$ such that
\begin{align*}
    \td \phi_X f_v(\xi) &= f_{\psi_X(v)}(\phi_X(\xi)),
\end{align*}
for all $v \in \vecV$, $\xi \in \calM$ and $X \in \grpG$.
In this case, one has
\begin{align}
    f_{\psi_X(v)}(\xi)
    &= f_{\psi_X(v)}(\phi_X(\phi_X^{-1}(\xi))), \notag \\
    &= \td \phi_X f_v(\phi_X^{-1}(\xi)), \notag \\
    &= \td_\star \phi_X f_v(\xi), \label{eq:equivalent_def_equivariance}
\end{align}
That is $f_{\psi_X(v)} = \td_\star \phi_X f_v$ as elements of $\gothX(\calM)$.
Equivalently, the diagram
\begin{equation*}
    \xymatrix{
        \vecV \ar[r]^{f} \ar[d]_{\psi_X} & \gothX(\calM) \ar[d]^{\td_\star \phi_X} \\
        \vecV \ar[r]^{f} & \gothX(\calM) \\
    }
\end{equation*}
commutes for all $X \in \grpG$.

From \eqref{eq:equivalent_def_equivariance} it follows that if an action $\psi$ exists for which $f$ is equivariant, then it is fully determined by $\td_\star \phi$.
However, the linear action $\td_\star \phi$ is defined on all $\gothX(\calM)$ and the right hand side of \eqref{eq:equivalent_def_equivariance} is always defined.
In particular, if $f : \vecV \to \gothX(\calM)$ is a kinematic system and $\td_\star \phi : \image(f) \to \image(f)$ then the right hand side of \eqref{eq:equivalent_def_equivariance} can be used to define the group action $\psi : \grpG \times \vecV \to \vecV$ by
\[
\psi_X(v) := f^{-1}(\td_\star \phi_X f_v)
\]
where the inverse $f^{-1}$ is well defined on $\image(f)$ since we assume $\ker f$ is trivial.
Conversely, if there exists $X \in \grpG$ and $v \in \vecV$ such that $\td_\star \phi_X f_v \not\in \image(f)$, then the action $\psi$ is not well defined and the associated system is not equivariant.

A key contribution of this paper is to show that the velocity space $\vecV$ can be extended to a larger velocity space $\calV$ on which $\td_\star \phi$ is closed and that this defines an extended equivariant system in a natural manner.
We define the \textit{extended velocity space},
\begin{align} \label{eq:extended_space}
    \calV = \text{span} \left\{ \td_\star \phi_X f_v \; \vline \; X \in \grpG, v \in \vecV \right\} \subset \gothX(\calM).
\end{align}
That is, $\calV$ is the smallest vector subspace of $\gothX(\calM)$ containing the image of $\td_\star \phi_X \circ f : \vecV \to \gothX(\calM)$ over all $X$.
Note that $\calV$ may be infinite-dimensional, depending on the structure of $f$ and $\phi$.

Define an \textit{extended (kinematic) system} $\calf: \calV \to \gothX(\calM)$ by the identity inclusion, that is, the linear map that identifies elements $\calf \in \calV$ to elements $\calf \in \gothX(\calM)$.
To perform computations with elements of the input space of the extended system it is convenient to introduce coordinates for $\calV$.
Through the original input function $f$, one may identify the coordinates of the velocity space $\vecV$ with coordinates of a subspace of the extended velocity space, $\vecV \cong f[\vecV] \leq \calV$.
That is, one writes coordinates $u \in \vecV$ for the vector field $f_u \in \calV$.
The coordinates used are an extension of these coordinates to a basis for the full extended velocity $\calV$ space.
Elements of the basis of $\calV$ are written using lower case letters $u, v, w$ and the corresponding vector fields $\calf_u, \calf_v, \calf_w \in \calV \subset \gothX(\calM)$ are denoted with the calligraphic font.
One has that $\calf_u = f_u$ for $u \in \vecV$ since the basis was generated by extension.
For any element $\calf_w \in \calV$ then recalling \eqref{eq:extended_space} there is a finite collection of $\{v_i\} \in \vecV$ and $\{ X_i \} \in \grpG$ such that
\begin{align} \label{eq:extended_kinematic_system}
    \calf_w & = \sum_{i=1}^n \td_\star \phi_{X_i} f_{v_i}.
\end{align}
The calligraphic font $\calf$ and roman font $f$ for functions are difficult to distinguish, however, since the two notations correspond on $\vecV$ and are identical in role, this is an advantage rather than a disadvantage in notation.


\begin{proposition} \label{prop:extended_system_equivariance}
    Consider a kinematic system \eqref{eq:kinematic_system} with symmetry group $\grpG$ and action $\phi: \grpG \times \calM \to \calM$.
    Construct the extended kinematic system as in \eqref{eq:extended_kinematic_system} with system function $\calf : \calV \to \gothX(\calM)$.
    Define $\psi: \grpG \times \calV \to \calV$ as the restriction of $\td_\star \phi$ to the extended velocity space $\calV$.
    That is,
    \[
    \psi(X,w) := \td_\star \phi_X \calf_w,
    \]
    for every $X \in \grpG$ and $w \in \calV$.
    Then $\psi$ is a well defined group action on $\calV$ and the extended system is equivariant with respect to the actions $\psi$ and $\phi$.
\end{proposition}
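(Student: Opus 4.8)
The plan is to verify three things in turn: that $\psi$ actually lands inside $\calV$ (well-definedness, i.e.\ closure of $\td_\star\phi$ on $\calV$), that it satisfies the right-action axioms, and that the extended system is equivariant for $\psi$ and $\phi$. Only the first step requires genuine work; the other two should follow almost formally from material already in the excerpt.

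First I would prove closure. Take an arbitrary $\calf_w \in \calV$ and, using the spanning description \eqref{eq:extended_space}, write it in the form \eqref{eq:extended_kinematic_system} as a finite sum $\calf_w = \sum_{i=1}^n \td_\star\phi_{X_i} f_{v_i}$ with $X_i \in \grpG$ and $v_i \in \vecV$; any scalar coefficients from the span may be absorbed into the $v_i$ by linearity of $f$ and of $\td_\star\phi_{X_i}$. Applying $\td_\star\phi_X$ and using linearity gives $\td_\star\phi_X \calf_w = \sum_i \td_\star\phi_X\,\td_\star\phi_{X_i} f_{v_i}$. The key move is the group-action composition law for $\td_\star\phi$ established earlier, which collapses each summand to $\td_\star\phi_{X_i X} f_{v_i}$. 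Each such term is a generator of $\calV$, so the whole sum lies in $\calV$; hence $\td_\star\phi_X$ maps $\calV$ into itself and $\psi$ is well defined as a map $\grpG \times \calV \to \calV$.

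With closure in hand, the action axioms are inherited directly from $\td_\star\phi$. The composition law $\psi(Y,\psi(X,w)) = \td_\star\phi_Y\,\td_\star\phi_X \calf_w = \td_\star\phi_{XY}\calf_w = \psi(XY,w)$ is exactly the one already proved, and the identity axiom follows since $\phi_{\id} = \id_\calM$ forces $\td_\star\phi_{\id}$ to be the identity on $\gothX(\calM)$, and hence on $\calV$.

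Finally, equivariance is essentially tautological given the definition of the extended system. Because $\calf$ is the identity inclusion $\calV \hookrightarrow \gothX(\calM)$, we have $\calf_{\psi_X(w)} = \psi_X(w) = \td_\star\phi_X\calf_w$ by the very definition of $\psi$, which is precisely the characterisation \eqref{eq:equivalent_def_equivariance} of equivariance; unwinding $\td_\star\phi_X = \td\phi_X(\cdot)\circ\phi_X^{-1}$ then recovers the pointwise identity $\td\phi_X\calf_w(\xi) = \calf_{\psi_X(w)}(\phi_X(\xi))$. I expect the only real subtlety to be bookkeeping in the composition law, namely getting the order $X_i X$ rather than $X X_i$ correct when invoking the earlier proposition, so I would state that law explicitly before applying it.
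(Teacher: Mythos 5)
Your proposal is correct and follows essentially the same route as the paper's proof: closure of $\td_\star\phi$ on $\calV$ via the decomposition $\calf_w = \sum_i \td_\star\phi_{X_i} f_{v_i}$ and the composition law (with the correct order $X_i X$), action axioms inherited from $\td_\star\phi$, and equivariance as a tautology from $\calf_{\psi_X(w)} = \td_\star\phi_X \calf_w$. The only differences are cosmetic — you absorb the span's scalar coefficients explicitly and unwind equivariance by direct evaluation at $\phi_X(\xi)$, where the paper instead passes through $\psi_{X^{-1}}$ and applies $\td\phi_X$ to both sides — and both refinements are, if anything, slightly cleaner than the original.
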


\begin{proof}
First, it is necessary to show that $\psi$ is well-defined and a group action.
Since $\td_\star \phi$ is a group action, $\psi$ must be a group action, and all that remains to show is that $\psi$ is well-defined.
It suffices to show that $\td_\star \phi_X(w) \in \calV$ for every $w \in \calV$ and $X \in \grpG$.
For any $w \in \calV$ and $Y \in \grpG$ then
\begin{align*}
    \psi(Y, w)
    &= \td_\star \phi_Y \calf_w, \\
    &= \td_\star \phi_Y \sum_{i=1}^n \td_\star \phi_{X_i} f_{v_i}, \\
    &= \sum_{i=1}^n \td_\star \phi_{X_i Y} f_{v_i} \in \calV.
\end{align*}
To show equivariance, consider that, by definition,
\begin{align*}
    \calf_{\psi_X(w)}
    &= \psi_X(w)
    = \td_\star \phi_X \calf_w,
\end{align*}
for every $X \in \grpG$ and $w \in \calV$.
Let $u \in \calV$ be arbitrary, and let $w = \psi_{X}(u)$.
Then,
\begin{align*}
    \calf_{u}(\xi) &= \calf_{\psi_{X^{-1}}(w)}(\xi) \\
    &= \td_\star \phi_{X^{-1}} \calf_w(\xi) \\
    &= \td \phi_{X^{-1}} \calf_w(\phi_{X}(\xi)).
\end{align*}
Thus, applying $\td \phi_{X}$ to both sides,
\[
\td \phi_{X} \calf_{u}(\xi) = \calf_{\psi_X(u)}(\phi_{X}(\xi)),
\]
which proves the extended system is equivariant.
\end{proof}

\subsection{Equivariant Lift}

Suppose the system \eqref{eq:kinematic_system} is equivariant with state symmetry $\phi : \grpG \times \calM \to \calM$ and input symmetry $\psi : \grpG \times V \to V$ for a general input linear vector space $V$.
A lift $\Lambda$ for the system is equivariant if
\begin{align} \label{eq:equivariant_lift}
    \Ad_X \Lambda(\phi(X, \xi), \psi(X, v)) = \Lambda(\xi, v),
\end{align}
for all $X \in \grpG$, $\xi \in \calM$, and $v \in V$.
Equivalently, the diagram
\begin{equation*}
    \xymatrix{
        \gothg \ar[r]^{\Ad_X} & \gothg \\
        \calM \times V \ar[r]^{\phi_X \times \psi_X} \ar[u]^\Lambda & \calM \times V \ar[u]_\Lambda
    }
\end{equation*}
commutes for all $X \in \grpG$.



In this paper, we will assume the existence of an equivariant lift for the system considered and we demonstrate equivariance of the lift chosen for the example given in \S\ref{sec:example}.
In \cite{2020_mahony_annrev}, it is shown that an equivariant lift exists for any equivariant system and an algorithm for construction of this lift is provided for certain cases.






\section{Equivariant Filter}
\label{sec:eqf_design}


The filter design proposed in the sequel requires an equivariant system with an equivariant lift.
A key result from \S\ref{sec:kinematic_equivariance} is that any system with a state symmetry can be extended to an equivariant system and we claim that any such system admits an equivariant lift.
Thus, the proposed filter design can be applied to any system with a homogeneous state symmetry by extending the system if necessary.


\subsection{Observer Architecture}


Consider a kinematic system as in \eqref{eq:kinematic_system} with a state symmetry $\phi: \grpG \times \calM \to \calM$.
Assume the system is equivariant with input symmetry $\psi: \grpG \times \vecV \to \vecV$.
Let $\Lambda : \calM \times \vecV \to \gothg$ be an equivariant lift for this system.
Given a fixed but arbitrary $\mr{\xi} \in \calM$ and a known input signal $v(t) \in \vecV$, the \textit{lifted system} \cite{RM_2013_Mahony_nolcos} is defined by the ODE
\begin{align} \label{eq:lifted_system}
    \dot{X} &= \td L_X \Lambda(\phi(X, \mr{\xi}), v), \quad    \phi(X(0), \mr{\xi}) = \xi(0).
\end{align}
where $X(0) \in \stab_\phi (\mr{\xi})$ is unknown.
The trajectory of the lifted system projects down to the original system trajectory by \cite{RM_2013_Mahony_nolcos}
\begin{align} \label{eq:lifted_system_projection}
    \phi(X(t), \mr{\xi}) \equiv \xi(t).
\end{align}
Define the state observer on the group as $\hat{X} \in \grpG$, with
\begin{align} \label{eq:observer_ode}
    \dot{\hat{X}} &= \td L_{\hat{X}} \Lambda(\phi(\hat{X}, \mr{\xi}), v) + \td R_{\hat{X}} \Delta, \quad  \hat{X}(0) = \id,
\end{align}
where the innovation term $\Delta$ remains to be chosen.

The state estimate of the observer is given by the projection
\[
\hat{\xi} =  \phi(\hat{X}(t), \mr{\xi}).
\]
Thus, the state of the observer is posed on the symmetry group rather than the state-space of the system kinematics.
The fact that the lifted system \eqref{eq:lifted_system} has virtual inputs does not impact on the effectiveness of using it as an internal model for the observer since the input required to implement \eqref{eq:observer_ode} is measured.
The innovation term $\Delta$ will be chosen by applying a Kalman-Bucy filter to linearised invariant error dynamics.

\subsection{Error Dynamics}

Let $\xi \in \calM$ be the true state of the system.
Choose a fixed origin $\mr{\xi} \in \calM$ and let $\hat{X} \in \grpG$ be a state observer with dynamics given by \eqref{eq:observer_ode}.
Define the global state error
\begin{align} \label{eq:state_error_dfn}
    e := \phi(\hat{X}^{-1},\xi).
\end{align}
Note that $\phi(\hat{X},\mr{\xi}) = \xi$ if and only if $e = \mr{\xi}$, and the goal of the filter design will be to drive $e \to \mr{\xi}$.
Define the \emph{origin velocity}
\begin{align} \label{eq:origin_velocity}
\mr{v} := \psi(\hat{X}^{-1}, v).
\end{align}
Note that this is measurable since both $\hat{X}(t)$ and $v(t)$ are available to the observer.

\begin{lemma}
Let the state error $e$ be defined as in \eqref{eq:state_error_dfn} and the origin velocity $\mr{v}$ be defined as in \eqref{eq:origin_velocity}.
The dynamics of $e$ are given by
\begin{align}
    \dot{e} = \td \phi_{e} \left( \Lambda(e, \mr{v}) - \Lambda(\mr{\xi}, \mr{v}) - \Delta \right), \label{eq:state_error_kinematics}
\end{align}
and depend only on $\mr{v}$ and $e$.
\end{lemma}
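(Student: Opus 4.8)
The plan is to differentiate the defining relation \eqref{eq:state_error_dfn}, $e = \phi(\hat{X}^{-1}, \xi)$, in time and to reduce each of the two resulting contributions to a function of $e$ and $\mr{v}$ alone. By the chain rule,
\[
\dot e = \tD\phi_\xi(\hat{X}^{-1})[\ddt \hat{X}^{-1}] + \td\phi_{\hat{X}^{-1}}[\dot\xi],
\]
where $\phi_\xi$ and $\phi_{\hat{X}^{-1}}$ denote the partial maps with the state and the group argument frozen, respectively. I would dispatch the second summand first: substituting $\dot\xi = f_v(\xi)$ from \eqref{eq:kin_system_trajectory} and applying equivariance of the system with $X = \hat{X}^{-1}$ gives $\td\phi_{\hat{X}^{-1}} f_v(\xi) = f_{\mr{v}}(e)$, since $\psi(\hat{X}^{-1}, v) = \mr{v}$ and $\phi(\hat{X}^{-1}, \xi) = e$; the lift condition \eqref{eq:lift_condition} applied at the state $e$ with velocity $\mr{v}$ then rewrites this as $\td\phi_e\Lambda(e, \mr{v})$, which is the first term of the claimed formula.

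The real work is in the first summand. I would use the Lie-group inversion formula $\ddt \hat{X}^{-1} = -\td L_{\hat{X}^{-1}}\td R_{\hat{X}^{-1}}\dot{\hat{X}}$ and substitute the observer dynamics \eqref{eq:observer_ode}, splitting the result into a lift piece and a $\Delta$ piece. The $\Delta$ piece collapses at once because $\td R_{\hat{X}^{-1}}\td R_{\hat{X}} = \id$, leaving $-\td L_{\hat{X}^{-1}}\Delta$. For the lift piece, the cancellations $\td R_{\hat{X}^{-1}}\td L_{\hat{X}} = \Ad_{\hat{X}}$ and $\td L_{\hat{X}^{-1}}\Ad_{\hat{X}} = \td R_{\hat{X}^{-1}}$ reduce it to the right-translation $-\td R_{\hat{X}^{-1}}\Lambda(\phi(\hat{X}, \mr{\xi}), v)$. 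Here I would invoke the equivariant lift \eqref{eq:equivariant_lift}: evaluated at $X = \hat{X}$, $\xi = \mr{\xi}$, $v = \mr{v}$, and using that $\psi$ is a right action so that $\psi(\hat{X}, \mr{v}) = v$, it yields $\Lambda(\phi(\hat{X}, \mr{\xi}), v) = \Ad_{\hat{X}^{-1}}\Lambda(\mr{\xi}, \mr{v})$. The key algebraic simplification $\td R_{\hat{X}^{-1}}\Ad_{\hat{X}^{-1}} = \td L_{\hat{X}^{-1}}$ then turns the lift piece into $-\td L_{\hat{X}^{-1}}\Lambda(\mr{\xi}, \mr{v})$, the same left-translated form as the $\Delta$ piece.

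It remains to push both pieces through $\tD\phi_\xi(\hat{X}^{-1})$. The essential tool is the right-action identity $\phi_\xi\circ L_{\hat{X}^{-1}} = \phi_{\phi(\hat{X}^{-1},\xi)} = \phi_e$; differentiating it at the identity gives $\tD\phi_\xi(\hat{X}^{-1})\circ \td L_{\hat{X}^{-1}} = \td\phi_e$, so that every left-translated Lie-algebra element is carried to $\td\phi_e$ of that element. Applying this to the two pieces produces $-\td\phi_e\Lambda(\mr{\xi}, \mr{v})$ and $-\td\phi_e\Delta$. Adding the state contribution $\td\phi_e\Lambda(e, \mr{v})$ from the first paragraph and collecting everything under the common map $\td\phi_e$ gives $\dot e = \td\phi_e(\Lambda(e, \mr{v}) - \Lambda(\mr{\xi}, \mr{v}) - \Delta)$, which manifestly depends only on $e$ and $\mr{v}$.

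I expect the main obstacle to be the bookkeeping of left and right translations and the adjoint at the non-identity base point $\hat{X}^{-1}$—in particular spotting the collapse $\td R_{\hat{X}^{-1}}\Ad_{\hat{X}^{-1}} = \td L_{\hat{X}^{-1}}$, which is exactly what lets the lift and innovation contributions be handled uniformly and what makes the origin velocity $\mr{v}$, rather than the raw input $v$, appear in the final expression.
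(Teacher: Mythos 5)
Your proof is correct, and it takes a genuinely different route from the paper's. The paper never differentiates $e = \phi(\hat{X}^{-1},\xi)$ directly: it introduces the lifted true-state trajectory $X(t)$ solving \eqref{eq:lifted_system}, which projects onto $\xi(t)$ by \eqref{eq:lifted_system_projection}, writes $e = \phi(E,\mr{\xi})$ for the group error $E := X\hat{X}^{-1}$, computes $\dot{E}$ --- this is where all of the translation and adjoint bookkeeping lives, together with the equivariant lift identity \eqref{eq:equivariant_lift} applied to both lift terms --- and only then projects down through $\td \phi_{\mr{\xi}}$, using $\tD \phi_{\mr{\xi}}(E) = \tD \phi_{\phi(E,\mr{\xi})}(\id) \circ \td L_{E^{-1}}$, which is the same right-action identity you exploit in the form $\tD \phi_{\xi}(\hat{X}^{-1}) \circ \td L_{\hat{X}^{-1}} = \tD \phi_{e}(\id)$. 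Your chain-rule split into a state contribution and a group contribution eliminates $X(t)$ and $E$ altogether: you handle the $\dot{\xi}$ term with system equivariance plus the lift condition \eqref{eq:lift_condition}, where the paper gets that term for free because $X(t)$ generates $\xi(t)$ through the lift, and you handle the observer term with a single application of \eqref{eq:equivariant_lift}. What each approach buys: yours is self-contained on $\calM$ and does not presuppose the existence and projection property of a lifted solution, which the paper imports from prior work; on the other hand, it requires equivariance of the system function $f$ as an explicit hypothesis (the paper's computation needs only the lift and its equivariance), and it pushes the Lie-group algebra to non-identity base points, so cancellations such as $\td R_{\hat{X}^{-1}} \Ad_{\hat{X}^{-1}} = \td L_{\hat{X}^{-1}}$ must be spotted by hand, whereas the paper's single computation of $\dot{E}$ packages them automatically before the final projection. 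Since the section assumes an equivariant system with an equivariant lift, both sets of hypotheses are available and both proofs are valid.
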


\begin{proof}
Let $X(t)$ be a solution to the lifted system \eqref{eq:lifted_system} satisfying $\phi(X(0), \mr{\xi}) = \xi(0)$.
Then,
\begin{align} \label{eq:error_state_group_rel}
    e = \phi(\hat{X}^{-1},\xi)
    = \phi(X \hat{X}^{-1}, \mr{\xi})
    = \phi(E, \mr{\xi}),
\end{align}
where $E := X \hat{X}^{-1}$.
Computing the dynamics of $E$, one has
\begin{align} \label{eq:group_error_kinematics}
    \dot{E} &= \td R_{\hat{X}^{-1}} \td L_X \Lambda(\phi(X, \mr{\xi}), v) \notag \\
    &\hspace{0.5cm} - \td L_X \td L_{\hat{X}^{-1}} \td R_{\hat{X}^{-1}} \left( \td L_{\hat{X}} \Lambda(\phi(\hat{X}, \mr{\xi}), v) + \td R_{\hat{X}} \Delta \right), \notag \\
    &= \td L_E \Ad_{\hat{X}} \left( \Lambda(\phi(X, \mr{\xi}), v) - \Lambda(\phi(\hat{X}, \mr{\xi}), v) \right) - \td L_E \Delta, \notag \\
    &= \td L_E \left( \Lambda(\phi(E, \mr{\xi}), \psi(\hat{X}^{-1}, v))
    - \Lambda(\mr{\xi}, \psi(\hat{X}^{-1}, v)) \right)
    \notag \\ &\hspace{2cm}
    - \td L_E \Delta,
\end{align}
where the last line follows from the equivariance of the lift.
Recall the definition $\mr{v} = \psi(\hat{X}^{-1}, v)$.
The dynamics of $e$ follow from \eqref{eq:error_state_group_rel} and \eqref{eq:group_error_kinematics},
\begin{align}
\dot{e}
&= \td \phi_{\mr{\xi}} \dot{E}, \notag \\
&= \td \phi_{\phi(E, \mr{\xi})} \td L_{E^{-1}} \dot{E}, \notag \\
&= \td \phi_{e} \left( \Lambda(\phi(E, \mr{\xi}), \mr{v}) - \Lambda(\mr{\xi}, \mr{v}) - \Delta \right), \notag \\
&= \td \phi_{e} \left( \Lambda(e, \mr{v}) - \Lambda(\mr{\xi}, \mr{v}) - \Delta \right), \notag
\end{align}
as required.
\end{proof}


\subsection{Linearisation}


In contrast to the Extended Kalman Filter (EKF), the linearisation of the state error used in the Equivariant filter does not depend on the estimated state $\hat{\xi}$ of the system, but only on the specific choice of origin $\mr{\xi}$ and the origin velocity $\mr{v}$.

Let $e \in \calM$ denote the state error \eqref{eq:state_error_dfn}, and let $\mr{v} \in \vecV$ denote the origin velocity \eqref{eq:origin_velocity}.
Fix a local coordinate chart $\varepsilon : U \to \R^M$ where $U \subset \calM$ is a neighbourhood of the fixed origin $\mr{\xi}$.
We overload the notation $\varepsilon$ to represent both the chart and the local coordinates of the state error,
\begin{align} \label{eq:linearised_state_error_dfn}
    \varepsilon = \varepsilon(e),
\end{align}
as is common in the literature of smooth manifolds \cite{2009_lee_book_manifolds}.
The EqF is designed around the linearised dynamics of $\varepsilon$.

\begin{proposition} \label{prop:epsilon_dynamics}
    Let $\varepsilon$ be local coordinates for the state error $e$ as in \eqref{eq:linearised_state_error_dfn}.
    The linearised dynamics of $\varepsilon$ about $\varepsilon = 0$ and $\Delta = 0$ are
    \begin{align}
        \dot{\varepsilon} &\approx \mr{A}_t \varepsilon - \td \varepsilon \cdot \tD \phi_{\mr{\xi}} (\id) [\Delta], \label{eq:linearised_state_error_dyn} \\
        \mr{A}_t &:= \td \varepsilon \cdot \tD \phi_{\mr{\xi}} (\id) \cdot \tD_{\mr{\xi}} \Lambda(\mr{\xi}, \mr{v}) \cdot \td \varepsilon^{-1} \label{eq:state_matrix_A_dfn},
    \end{align}
    which depend only on the origin velocity $\mr{v}$ \eqref{eq:origin_velocity} as an external input.
\end{proposition}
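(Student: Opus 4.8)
The plan is to obtain \eqref{eq:linearised_state_error_dyn} by differentiating the chart relation \eqref{eq:linearised_state_error_dfn} along the flow and then extracting the first-order part in $\varepsilon$ and $\Delta$ about the nominal point. First I would write $\dot{\varepsilon} = \tD\varepsilon(e)[\dot{e}]$ by the chain rule and substitute the error kinematics \eqref{eq:state_error_kinematics} from the preceding lemma, giving
\begin{align*}
\dot\varepsilon = \tD\varepsilon(e)\Bigl[\tD\phi_e(\id)\bigl[\Lambda(e,\mr{v}) - \Lambda(\mr{\xi},\mr{v}) - \Delta\bigr]\Bigr].
\end{align*}
Since $\varepsilon$ is centred at $\mr{\xi}$, the nominal point $\varepsilon = 0$ corresponds to $e = \mr{\xi}$, and I would verify that at $(\varepsilon,\Delta)=(0,0)$ the right-hand side vanishes, because the inner difference $\Lambda(\mr{\xi},\mr{v}) - \Lambda(\mr{\xi},\mr{v})$ is zero. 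This confirms the origin is an equilibrium when $\Delta = 0$ and makes the linearisation meaningful.

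Next I would treat the two inputs separately. The innovation $\Delta$ enters linearly through the linear maps $\tD\phi_e(\id)$ and $\tD\varepsilon(e)$, so to first order its coefficient is simply frozen at the nominal point $e = \mr{\xi}$, yielding $-\tD\varepsilon(\mr{\xi})\cdot\tD\phi_{\mr{\xi}}(\id)[\Delta] = -\td\varepsilon\cdot\tD\phi_{\mr{\xi}}(\id)[\Delta]$, where $\td\varepsilon$ denotes the chart differential at $\mr{\xi}$. This reproduces the second term of \eqref{eq:linearised_state_error_dyn}.

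For the drift term (set $\Delta=0$) I would regard the right-hand side as a function of the coordinate $\varepsilon$ through $e = \varepsilon^{-1}(\varepsilon)$ and differentiate at $\varepsilon = 0$. The chain rule contributes an outer factor $\tD\varepsilon^{-1}(0) = \td\varepsilon^{-1}$, so it remains to differentiate $F(e) := \tD\varepsilon(e)\cdot\tD\phi_e(\id)\bigl[\Lambda(e,\mr{v})-\Lambda(\mr{\xi},\mr{v})\bigr]$ with respect to $e$ at $e=\mr{\xi}$. Here $F$ is a product of three $e$-dependent factors: the chart differential $\tD\varepsilon(e)$, the action differential $\tD\phi_e(\id)$, and the Lie-algebra-valued difference $\Lambda(e,\mr{v})-\Lambda(\mr{\xi},\mr{v})$. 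The key observation is that this third factor vanishes at $e=\mr{\xi}$, so in the product rule every term that differentiates the first or second factor while leaving the third undifferentiated drops out; only the term differentiating the third factor survives, giving $\tD\varepsilon(\mr{\xi})\cdot\tD\phi_{\mr{\xi}}(\id)\cdot\tD_{\mr{\xi}}\Lambda(\mr{\xi},\mr{v})$. Composing with the outer factor $\td\varepsilon^{-1}$ produces exactly $\mr{A}_t$ as defined in \eqref{eq:state_matrix_A_dfn}.

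I expect the main obstacle to be the careful bookkeeping of this product-rule step: one must recognise that the vanishing of $\Lambda(e,\mr{v})-\Lambda(\mr{\xi},\mr{v})$ at the origin is precisely what collapses the Jacobian to the single clean term appearing in $\mr{A}_t$, and one must track which differentials are evaluated at the fixed origin $\mr{\xi}$ versus along the variable point $e$, together with the inverse-chart factor $\td\varepsilon^{-1}$ that arises from expressing the dynamics in the coordinate $\varepsilon$ rather than the manifold point $e$. Since $\mr{v}$ is the only remaining quantity entering $\tD_{\mr{\xi}}\Lambda(\mr{\xi},\mr{v})$ and the two bracketed differential maps are fixed by the choice of origin, the resulting dynamics depend only on $\mr{v}$ as an external input, as claimed.
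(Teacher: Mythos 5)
Your proposal is correct and follows essentially the same route as the paper: substitute the error kinematics $\dot{e} = \td\phi_e(\Lambda(e,\mr{v}) - \Lambda(\mr{\xi},\mr{v}) - \Delta)$ from the preceding lemma, exploit the fact that the difference $\tilde{\Lambda}_{\mr{\xi}}(e,\mr{v}) := \Lambda(e,\mr{v})-\Lambda(\mr{\xi},\mr{v})$ vanishes at $e=\mr{\xi}$ so that only the term differentiating it survives the linearisation, and freeze the coefficient of the linear-in-$\Delta$ term at the origin. Your explicit product-rule bookkeeping makes rigorous a step the paper leaves implicit, but the decomposition and key observation are identical.
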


\begin{proof}
Consider the nonlinear dynamics of the global state error \eqref{eq:state_error_kinematics}
\begin{align}
    \dot{e} & =  \td \phi_{e} \left( \Lambda(e, \mr{v}) - \Lambda(\mr{\xi}, \mr{v})\right)  - \td \phi_{e} \Delta, \notag \\
    & = \td \phi_e \tilde{\Lambda}_{\mr{\xi}} (e,\mr{v}) - \td \phi_{e} \Delta, \notag \\
    \tilde{\Lambda}_{\mr{\xi}}(e, \mr{v}) &:=  \Lambda(e, \mr{v}) - \Lambda(\mr{\xi}, \mr{v})
\label{eq:alpha}
\end{align}
where $\td \phi_e \tilde{\Lambda}_{\mr{\xi}}(e, \mr{v})$ models the error drift term when the innovation $\Delta = 0$.
Then,
\begin{align} \label{eq:vareps_nldyn}
    \dot{\varepsilon} &= \td \varepsilon \cdot \td \phi_e \tilde{\Lambda}_{\mr{\xi}}(\varepsilon^{-1}(\varepsilon), \mr{v}) - \td \varepsilon \cdot \td \phi_{e} \Delta.
\end{align}
Clearly, $\tilde{\Lambda}_{\mr{\xi}}(\mr{\xi}, \mr{v}) \equiv 0$.
Hence, linearising $\td \varepsilon \cdot \td \phi_e \tilde{\Lambda}_{\mr{\xi}}(\varepsilon^{-1}(\varepsilon), \mr{v})$ about $\varepsilon = 0$ yields
\begin{align*}
    \td \varepsilon \cdot &\td \phi_e \tilde{\Lambda}_{\mr{\xi}}(\varepsilon^{-1}(\varepsilon), \mr{v}) \\
    &\approx \td \varepsilon \cdot \td \phi_{\mr{\xi}} \cdot \tilde{\Lambda}_{\mr{\xi}}(\varepsilon^{-1}(0), \mr{v}) \\
    &\hspace{1cm} + \td \varepsilon \cdot \td \phi_{\mr{\xi}} \cdot \tD_{\mr{\xi}} \tilde{\Lambda}_{\mr{\xi}}(\varepsilon(^{-1}(0)), \mr{v}) \cdot \td \varepsilon^{-1} \varepsilon, \\
    &= \td \varepsilon \cdot \td \phi_{\mr{\xi}} \cdot \tD_{\mr{\xi}} \tilde{\Lambda}_{\mr{\xi}} (\mr{\xi}, \mr{v}) \cdot \td \varepsilon^{-1} \varepsilon, \\
    &= \td \varepsilon \cdot \tD \phi_{\mr{\xi}} (\id) \cdot \tD_{\mr{\xi}} \Lambda(\mr{\xi}, \mr{v}) \cdot \td \varepsilon^{-1} \varepsilon, \\
    &= \mr{A}_t \varepsilon.
\end{align*}
This gives the linearisation of the first term in \eqref{eq:vareps_nldyn}
Since we are linearising about $\Delta = 0$, the linearisation of $\dot{\varepsilon}$ \eqref{eq:vareps_nldyn} is
\begin{align*}
    \dot{\varepsilon} &\approx \mr{A}_t \varepsilon - \td \varepsilon \cdot\tD \phi_{\mr{\xi}}(\id) [\Delta],
\end{align*}
as required.
\end{proof}

Observe that the linearised dynamics of $\varepsilon$ depend on $\tD \phi_{\mr{\xi}} (\id) \Delta$, that is, they depends linearly on $\Delta$.
In the case where the group $\grpG$ is of higher dimension than the state space $\calM$, the map $\tD \phi_{\mr{\xi}} (\id)$ is not injective, and the component of $\Delta$ in the kernel is absent in the linearised dynamics of $\varepsilon$.

Consider the state error $e$ \eqref{eq:state_error_dfn}, and let $\xi \in \calM$ and $\hat{X} \in \grpG$ denote the true system state and observer state, respectively.
Let $\varepsilon \in \R^M$ represent local coordinates for $e$ as in \eqref{eq:linearised_state_error_dfn}.
The system output $h(\xi) \in \calN$ can be written
\begin{align} \label{eq:h_in_error_coords}
    h(\xi) = h(\phi(\hat{X}, e)) = h(\phi_{\hat{X}}(\varepsilon^{-1}(\varepsilon))).
\end{align}
Note that substituting $\varepsilon = 0$ gives
\begin{align*}
h(\phi_{\hat{X}}(\varepsilon^{-1}(0))) & =
h(\phi_{\hat{X}}(\mr{\xi}))
= h(\hat{\xi}).
\end{align*}
The measurement residual $h(\xi) - h(\hat{\xi}) \in \R^N$ is computed using the embedding of $\calN$ into $\R^N$.
Linearising $h(\xi) - h(\hat{\xi})$ in $\R^N$ as a function of $\varepsilon \in \R^M$ around the point $\varepsilon = 0$ yields
\begin{align}
h(\xi) - h(\hat{\xi}) & \approx C_t \varepsilon, \label{eq:linearised_measurement_error_dyn} \\
C_t &:= \tD h (\hat{\xi}) \cdot \tD \phi_{\hat{X}} (\mr{\xi}) \cdot \td \varepsilon^{-1}, \label{eq:output_matrix_C_dfn}
\end{align}
where $C_t \in \R^{N \times M}$ is the differential of \eqref{eq:h_in_error_coords} at the point $\varepsilon = 0$.




\subsection{Equivariant Filter (EqF)}
\label{sec:eqf_definition}


Consider a kinematic system as in \eqref{eq:kinematic_system} with a state symmetry $\phi: \grpG \times \calM \to \calM$.
Assume the system is equivariant with input symmetry $\psi: \grpG \times \vecV \to \vecV$ and has an equivariant lift $\Lambda : \calM \times \vecV \to \gothg$.
Let $\xi \in \calM$ denote the true state of the system, with trajectory determined by the measured input $v(t) \in \vecV$.

We construct the Equivariant Filter (EqF) as follows.
Let $\hat{X} \in \grpG$ denote the observer state.
Pick an arbitrary fixed origin $\mr{\xi} \in \calM$, and let $\mr{A}_t$ and $C_t$ be defined as in \eqref{eq:state_matrix_A_dfn} and \eqref{eq:output_matrix_C_dfn} by choosing a local coordinate chart $\varepsilon$ around $\mr{\xi}$.
Choose an initial value for the Riccati term $\Sigma_0 \in \Sym_+(M)$, where $\Sym_+(M)$ is the set of positive-definite symmetric $M \times M$ matrices, and pick a state gain matrix $P \in \Sym_+(M)$ and an output gain matrix $Q \in \Sym_+(N)$.
Choose a right inverse $\tD \phi_{\mr{\xi}}(\id)^\dag$ of $\tD \phi_{\mr{\xi}}(\id)$; that is, $\tD \phi_{\mr{\xi}}(\id) \cdot \tD \phi_{\mr{\xi}}(\id)^\dag = \id$.

Define the EqF dynamics to be
\begin{align}
    \dot{\hat{X}} &= \td L_{\hat{X}} \Lambda(\phi(\hat{X}, \mr{\xi}), v) + \td R_{\hat{X}} \Delta, \quad \hat{X}(0) = \id \label{eq:eqf_group_observer} \\
    \Delta &= \tD \phi_{\mr{\xi}}(\id)^\dag \td \varepsilon^{-1} \Sigma C_t^\top Q^{-1} (h(\xi) - h(\hat{\xi})), \label{eq:eqf_delta_innovation} \\
    \dot{\Sigma} &= \mr{A}_t \Sigma + \Sigma {\mr{A}_t}^\top + P - \Sigma C_t^\top Q^{-1} C_t \Sigma, \quad \Sigma(0) = \Sigma_0\label{eq:eqf_riccati}
\end{align}
where the subtraction $h(\xi) - h(\hat{\xi})$ \eqref{eq:eqf_delta_innovation} is understood by using the embedding of $\calN \hookrightarrow \R^N$.

If the pair $(\mr{A}_t, C_t)$ is uniformly observable, then $\Sigma(t)$ is bounded above and below, and hence the Riccati equation \eqref{eq:eqf_riccati} and innovation $\Delta$ \eqref{eq:eqf_delta_innovation} are well-defined for all time \cite{2001_deylon_tac}.
A key property of the EqF is that $\mr{A}_t$ is independent of the state, as it arises from the linearisation of the error state around a fixed origin $\mr{\xi}$, and depends only on $\mr{v}$.


\section{Example: 2D derotated ego-centric visual SLAM}
\label{sec:example}


In this section we consider the problem of 2D derotated ego-centric visual SLAM.
Note that while the proposed EqF does provide an interesting and novel approach to SLAM problem considered, the focus of the present work is to demonstrate the design procedure rather than to benchmark the performance of the resulting filter against previously published algorithms.
The present example has been chosen to be as simple algebraically as possible while capturing the particular properties of the SLAM problem that make the equivariant filter approach of interest.

Consider a robot moving on a 2D plane with a camera that is observing the bearings of 2D landmarks in the plane.
Assume that the camera has a full $360^\circ$ field of view (ensuring landmarks do not leave the field of view of the camera) and the image has been derotated so that it has a fixed orientation in the reference frame (allowing the angular velocity of the robot to be ignored).
Our objective is to estimate the coordinates of a collection of observed landmarks in the derotated \emph{body-fixed frame} of a moving robot.
That is, we estimate the ego-centric structure of the environment.
We do not try to estimate the evolution of the robot pose or consider the rotation of the robot.


\subsection{System Definition}


Let $v \in \R^2$ be the inertial velocity of the robot and $x_i \in \R^2$ for $i=1,..,n$ be the coordinates of $n$ stationary points expressed in the \emph{body-fixed frame}.
The camera measures the bearing to each point, $y_i := \frac{x_i}{\Vert x_i \Vert}$.
The body-fixed origin of the robot is chosen at the focal point of the camera.
The ego-centric kinematics of the points in the body-fixed frame are
\begin{align}
    \dot{x}_i &= f_v(x_i) := -v, \label{eq:bvslam_kinematics}\\
    y_i &= h(x_i) := \frac{x_i}{\Vert x_i \Vert} \label{eq:bvslam_measurement}.
\end{align}
In this system, the velocity space is $\vecV := \R^2$, the state space is $\calM := (\R^{2} \setminus \{0\})^n$, and the measurement space is $\calN := (\Sph^1)^n$.
Note that $x_i = 0$ must be excluded from the state space as the measurement function is not defined at these points.

Define the product Lie group $\grpG := ({\bf \Sph^1} \times \MR)^n$, where ${\bf \Sph^1}$ is the 1-sphere under addition, and $\MR$ is the positive reals under multiplication.
Define $\phi:\grpG \times \calM \to \calM$ by
\begin{align} \label{eq:bvslam_symmetry_phi}
    \phi((\theta_i, a_i), x_i) := a_i^{-1} R(\theta_i)^\top x_i,
\end{align}
where $R(\theta_i) \in \SO(2)$ is the rotation matrix constructed from the angle $\theta_i \in \Sph^1$.
Then, $\phi$ is a clearly a smooth, transitive right group action of $\grpG$ on $\calM$.


To see that this system is not equivariant \eqref{eq:equivalent_def_equivariance} consider the following computation.
Given $(\theta_i, a_i) \in \grpG$ and $v \in \R^2$,
\begin{align}
    \td_\star \phi_{(\theta_i, a_i)} f_v(x_i)
    &= \td \phi_{(\theta_i, a_i)} f_v(\phi_{(\theta_i, a_i)^{-1}}(x_i)), \notag \\
    &= \tD \phi_{(\theta_i, a_i)} (a_i R(\theta_i) x_i) f_v(a_i R(\theta_i) x_i), \notag \\
    &= \tD \phi_{(\theta_i, a_i)} (a_i R(\theta_i) x_i)[-v], \notag \\
    &= -a_i^{-1} R(\theta_i)^\top v. \label{eq:example_vector_space_action}
\end{align}
For the system to be equivariant, this holds for each $i = 1, \ldots, n$, and for any choices of $(\theta_i, a_i) \in \grpG$ and $v \in \R^2$.
In particular, if one has $f_w = \td_\star \phi_{(\theta_i, a_i)} f_v(x_i)$ then
\begin{align}
w = -a_i^{-1} R(\theta_i)^\top v
\label{eq:w_not_defined}
\end{align}
for $i = 1, \ldots, n$.
For general $a_i \not= a_j$ and $\theta_i \not= \theta_j$ then \eqref{eq:w_not_defined} does not hold and $\td_\star \phi_{(\theta_i, a_i)} f_v(x_i)$ does not lie in $\image(f)$.
To implement the EqF, this system requires a velocity extension.


\subsection{Velocity Extension and Kinematic Equivariance}


Let $(v_i) = (v_1,...,v_n) \in \R^{2n} \equiv V$ and define
$\calf_{(v_i)}(x_i) = v_i$ to be the vector field associated with extended kinematics
\begin{align}
\dot{x}_i &= \calf_{(v_i)} (x_i) := -v_i.  \label{eq:ext_bvslam_kinematics}
\end{align}
Let $\calV = \image(\calf) : V \to \gothX(\calM)$ to be the associated subspace of vector fields.
Thus, $V \equiv \R^{2n}$ will play the role of coordinates for the extended input space and $\calV$ denotes the associated vector fields.
Note that the original kinematics \eqref{eq:bvslam_kinematics} are contained in the extended system by setting $(v_i) = (v, \ldots, v)$.

Given any $v \in \vecV$ and $(\theta_i, a_i) \in \grpG$ then define $\psi : \grpG \times V \to V$ by
\begin{align}
 \psi((\theta_i, a_i),v_i) :=  a_i^{-1} R(\theta_i)^\top v_i.
\label{eq:psi_example}
\end{align}
It is easily verified that $\psi$ is a group action on $V$.

One has that
\begin{align}
\td_\star \phi_{(\theta_i, a_i)} f_{(v_i)}(x_i) &= -a_i^{-1} R(\theta_i)^\top v_i \label{eq:inline_1}\\
& =  - \psi((\theta_i, a_i),v_i) \label{eq:inline_2} \\
& = \calf_{\psi((\theta_i, a_i),v_i)}(x_i).\label{eq:inline_3}
\end{align}
where \eqref{eq:inline_1} follows from \eqref{eq:example_vector_space_action}, and \eqref{eq:inline_2} and \eqref{eq:inline_3} follow from the definitions of $\psi$ and $\calf$, respectively.
Thus $\td_\star \phi : \image (f) \to \calV$, and consequently, $\calV$ is closed under $\td_\star \phi$.

Therefore, $\calf$ is a kinematic extensions of $f$, and the extended kinematic system \eqref{eq:ext_bvslam_kinematics} is equivariant since
\begin{align*}
\td \phi_{(\theta_i, a_i)} \calf_{(v_i)}(x_i) & = \td_\star \phi_{(\theta_i, a_i)} \calf_{(v_i)}(\phi_{(\theta_i, a_i)}(x_i)) \\
& = \calf_{\psi_{(\theta_i,a_i)}(v_i)}(\phi_{(\theta_i, a_i)}(x_i)).
\end{align*}

\subsection{Equivariant Lift}


Define the function $\Lambda: \calM \times \calV \to \gothg$ as
\begin{align}
    \Lambda(x_i, v_i) &= \left( \frac{x_i^\top S v_i}{x_i^\top x_i}, -\frac{x_i^\top v_i}{x_i^\top x_i} \right), \\
    S &= \begin{pmatrix}
        0 & -1 \\ 1 & 0
    \end{pmatrix},
\end{align}
where $S$ has the properties $S^\top = -S$ and $z^\top S z = 0$ for any $z \in \R^2$.
Then $\Lambda$ is an equivariant lift for the extended kinematic system \eqref{eq:ext_bvslam_kinematics} with symmetry $\phi$ \eqref{eq:psi_example}.

To verify that $\Lambda$ is a lift \eqref{eq:lift_condition}:
\begin{align*}
    \tD \phi_{x_i} (\id) \left[\Lambda(x_i, v_i)\right]
    &= \tD \phi_{x_i} ((0, 1)) \left[ \frac{x_i^\top S v_i}{x_i^\top x_i}, -\frac{x_i^\top v_i}{x_i^\top x_i} \right], \\
    &= \frac{x_i^\top v_i}{x_i^\top x_i} x_i + \frac{x_i^\top S v_i}{x_i^\top x_i} S^\top x_i, \\
    &= \frac{x_i x_i^\top}{x_i^\top x_i} v_i + \frac{(S^\top x_i)(S^\top x_i)^\top}{x_i^\top x_i} v_i, \\
    &= v_i,
\end{align*}
where the last line follows from $\frac{(S^\top x_i)(S^\top x_i)^\top}{x_i^\top x_i} = I_2 - \frac{x_i x_i^\top}{x_i^\top x_i}$.


To verify that $\Lambda$ is equivariant \eqref{eq:equivariant_lift}:
\begin{align}
    \Ad & _{(\theta_i, a_i)} \Lambda(\phi_{(\theta_i, a_i)}(x_i), \psi_{(\theta_i, a_i)}(v_i)) \notag \\
    \label{eq:step:abelian_adjoint}
    &= \Lambda ( a_i^{-1} R(\theta_i)^\top x_i, a_i^{-1} R(\theta_i)^\top v_i ), \\
    &= \left( \frac{(a_i^{-1} R(\theta_i)^\top x_i)^\top S (a_i^{-1} R(\theta_i)^\top v_i)}{\Vert a_i^{-1} R(\theta_i)^\top x_i \Vert^2},
    \right. \notag \\ &\hspace{2cm} \left.
    - \frac{(a_i^{-1} R(\theta_i)^\top x_i)^\top a_i^{-1} R(\theta_i)^\top v_i}{\Vert a_i^{-1} R(\theta_i)^\top x_i \Vert^2} \right), \notag \\
    &= \left( \frac{x_i^\top R(\theta_i) S R(\theta_i)^\top v_i}{\Vert x_i \Vert^2}, - \frac{ x_i^\top R(\theta_i) R(\theta_i)^\top v_i}{\Vert x_i \Vert^2} \right), \notag \\
    &= \left( \frac{x_i^\top S v_i}{\Vert x_i \Vert^2}, - \frac{ x_i^\top v_i}{\Vert x_i \Vert^2} \right), \notag \\
    &= \Lambda(x_i, v_i), \notag
\end{align}
where \eqref{eq:step:abelian_adjoint} follows from the fact that $\grpG$ is abelian.

\subsection{EqF Implementation}


We follow the EqF design procedure as described in \S\ref{sec:eqf_definition}.

We use the standard embedding of $\calN = (\Sph^1)^n \hookrightarrow (\R^2)^n$, that is, given $y \in \Sph^1$, we identify $y$ with its coordinates as a unit vector in $\R^2$.
Finally, we identify the tangent space $\tT_{x_i} \calM$ with $\R^{2n}$ for every $(x_i) \in \calM = (\R^{2} \setminus \{0\})^n$.

Fix an origin element $\mr{x}_i \in \calM$, and let $(\hat{\theta}_i, \hat{a}_i) \in \grpG$ denote the observer state.
The linearised error state matrix $\mr{A}_t$ and linearised error output matrix $C_t$ for this system can be explicitly written in terms of the origin element $\mr{x}_i$ and the origin velocity $\mr{v} := \psi((\hat{\theta}_i, \hat{a}_i)^{-1}, v)$ by choosing to identify the neighbourhood around $(\mr{x}_i)$ with its neighbourhood in $\R^{2n}$.
\begin{align*}
    \mr{A}_t &= \text{diag}(A_i), \\
    A_i &= \frac{1}{\Vert \mr{x}_i \Vert^4} \left( \Vert \mr{x}_i \Vert^2\left( - (S\mr{x}_i) (S\mr{v}_i)^\top + \mr{x}_i {\mr{v}_i}^\top  \right)
    \right. \\ &\hspace{2cm} \left.
    + 2 {\mr{x}_i}^\top S \mr{v}_i (S \mr{x}_i) {\mr{x}_i}^\top - 2{\mr{x}_i}^\top \mr{v}_i {\mr{x}_i} {\mr{x}_i}^\top \right), \\
    C &= \text{diag}(C_i), \\
    C_i &= R(\theta_i)^\top \frac{1}{\Vert \mr{x}_i \Vert }\left( I_2 - \frac{\mr{x}_i {\mr{x}_i}^\top}{{\mr{x}_i}^\top \mr{x}_i} \right),
\end{align*}
In this system, $\tD \phi_{\mr{x}_i}(\id)$ is full rank and thus has a unique inverse,
\begin{align*}
    (\tD \phi_{\mr{x}_i}(\id)^{-1}) = \diag \begin{pmatrix}
        {\mr{x}_i}^\top S \\ - {\mr{x}_i}^\top
    \end{pmatrix},
\end{align*}
which is required to compute the innovation $\Delta \in \gothg$ as in \eqref{eq:eqf_delta_innovation}.

\begin{lemma} \label{lem:example_observability}
    Assume that $\Vert v \Vert$ and $\Vert x_i \Vert$ are bounded, and that there exist $\tau, \delta > 0$ such that
    \begin{align*}
    \frac{1}{\tau} \int_t^{t+\tau} (v(s)^\top S x_i (s))^2 \td s \geq \delta,
    \end{align*}
    for all $i$ and all $t \geq 0$.
    Then the linearised error system is uniformly observable.
\end{lemma}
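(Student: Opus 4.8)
The plan is to verify uniform observability through the observability Gramian of the linearised pair $(\mr{A}_t, C)$, exploiting its block structure. Since $\mr{A}_t = \diag(A_i)$ and $C = \diag(C_i)$, both the state-transition matrix and the Gramian are block-diagonal, so the pair is uniformly observable if and only if each planar subsystem $(A_i, C_i)$ is. I would therefore fix one index $i$ and work with the $2\times 2$ subsystem. First I would simplify the output: since $R(\hat{\theta}_i)^\top \in \SO(2)$ is orthogonal, $C_i^\top C_i = \tfrac{1}{\Vert \mr{x}_i\Vert^2}\left(I_2 - \tfrac{\mr{x}_i \mr{x}_i^\top}{\mr{x}_i^\top \mr{x}_i}\right)$ is independent of the observer angle and equals $\tfrac{1}{\Vert \mr{x}_i\Vert^2}$ times the orthogonal projector onto $\spn\{S\mr{x}_i\}$. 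Thus the measurement only sees the component of the error $\varepsilon_i$ tangent to the bearing circle, while the radial component (along $\mr{x}_i$) is instantaneously unobserved and must be recovered through the drift $A_i$.

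The key step is to make the planar transition matrix explicit. Writing $r = \Vert\mr{x}_i\Vert$, $n = \mr{x}_i/r$, $n^\perp = Sn$, and setting $p := n^\top \mr{v}_i$ and $q := (n^\perp)^\top \mr{v}_i$, a direct computation from the stated expression for $A_i$ shows that in the orthonormal basis $\{n, n^\perp\}$ the drift takes the remarkably simple form $\bar{A}_i = -\tfrac{p}{r} I_2 - \tfrac{q}{r} S$. Because $I_2$ and $S$ commute, the transition matrix $\Phi(s,t)$ of $\dot{\varepsilon}_i = A_i \varepsilon_i$ integrates in closed form as a scaled rotation, $\Phi(s,t) = e^{-P(s)} R(-\Theta(s))$, with $P(s) = \tfrac1r\int_t^s p\,\td\sigma$ and accumulated rotation $\Theta(s) = \tfrac1r\int_t^s q\,\td\sigma$. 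Substituting into the Gramian and using $C_i^\top C_i \propto n^\perp (n^\perp)^\top$, the quadratic form reduces, for a unit direction with components $(\cos\varphi,\sin\varphi)$ in the basis $\{n,n^\perp\}$, to the scalar integral $\tfrac{1}{r^2}\int_t^{t+\tau} e^{-2P(s)}\sin^2(\Theta(s)-\varphi)\,\td s$. Finally I would identify the excitation: since $\mr{v}_i = \psi(\hat{X}^{-1},v) = \hat{a}_i R(\hat{\theta}_i)v$ and $\mr{x}_i = \hat{a}_i R(\hat{\theta}_i)\hat{\xi}_i$, the invariance $R^\top S R = S$ gives $\mr{x}_i^\top S\mr{v}_i = \hat{a}_i^2\,\hat{\xi}_i^\top S v$, so $q = -\mr{x}_i^\top S\mr{v}_i/r$ equals, up to the bounded factor $\hat{a}_i^2/r$, the perpendicular-velocity term that reduces at the true state to the hypothesis quantity $v^\top S x_i$.

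The remaining work, and the main obstacle, is the uniform lower bound. Boundedness of $\Vert v\Vert$ and $\Vert x_i\Vert$, together with boundedness of the observer trajectory (so that $\hat{a}_i$ stays bounded above and below), bounds $p$ and hence $P$ over any window of length $\tau$, giving $e^{-2P}\geq c>0$ and an immediate uniform upper bound on the Gramian. The difficulty is bounding $\int_t^{t+\tau}\sin^2(\Theta(s)-\varphi)\,\td s$ below uniformly in $t$ and in every direction $\varphi$. Expanding $\sin^2(\Theta-\varphi)=\tfrac12-\tfrac12\cos(2\varphi)\cos 2\Theta-\tfrac12\sin(2\varphi)\sin 2\Theta$, the worst-case direction yields $\tfrac{\tau}{2}-\tfrac12\sqrt{\big(\int_t^{t+\tau}\cos 2\Theta\big)^2+\big(\int_t^{t+\tau}\sin 2\Theta\big)^2}$, so the crux is to show the accumulated rotation $\Theta$, driven by $\dot{\Theta}=q/r$, turns enough over the window that this quantity stays uniformly above zero. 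This is exactly where the persistence-of-excitation hypothesis $\tfrac1\tau\int_t^{t+\tau}(v^\top S x_i)^2\,\td s\geq\delta$ enters, via the identification above. The delicate point is that the hypothesis controls $\int\dot{\Theta}^2$ rather than the net turning of $\Theta$, so I would use the boundedness of $\dot{\Theta}$ to convert the $L^2$ excitation bound into a genuine variation bound on $\Theta$, and thence into a uniform positive lower bound on the scalar integral for every $\varphi$; this handling of possible cancellation in $\Theta$ is the heart of the argument. With both bounds established, each $(A_i,C_i)$ is uniformly observable, hence so is the block-diagonal system, and by the cited result the Riccati solution $\Sigma(t)$ is bounded.
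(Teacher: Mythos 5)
The paper gives no argument of its own here: its ``proof'' is a one-line citation to Theorem 3.1 of \cite{2018_Hamel_TAC}, with all details omitted to save space. Your self-contained Gramian route is therefore necessarily different, and its computational core is correct and verifiable: writing $r = \Vert\mr{x}_i\Vert$, $n = \mr{x}_i/r$, $n^\perp = Sn$, $p = n^\top\mr{v}_i$, $q = (n^\perp)^\top\mr{v}_i$, the stated $A_i$ does collapse to $-\tfrac{p}{r}I_2 - \tfrac{q}{r}S$; since $I_2$ and $S$ commute, the transition matrix is indeed $e^{-P(s)}R(-\Theta(s))$; the Gramian quadratic form reduces to $\tfrac{1}{r^2}\int_t^{t+\tau} e^{-2P(s)}\sin^2(\Theta(s)-\varphi)\,\td s$; and the identification $q = -\mr{x}_i^\top S\mr{v}_i/r$ together with $\mr{x}_i^\top S\mr{v}_i = \hat{a}_i^2\,\hat{x}_i^\top S v$ correctly ties the excitation to the hypothesis quantity at the linearisation point $\hat{x}_i = x_i$. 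The block-diagonal reduction is also fine.

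The gap is the step you yourself flag as ``the heart of the argument,'' and it is not merely delicate: as planned, it fails. Boundedness of $\dot{\Theta}$ together with the $L^2$ bound $\tfrac{1}{\tau}\int_t^{t+\tau}\dot{\Theta}^2\,\td s \geq \delta'$ does \emph{not} imply a uniform lower bound on $\int_t^{t+\tau}\sin^2(\Theta(s)-\varphi)\,\td s$. Take $\Theta(s) = \epsilon\sin(\omega(s-t))$ with $\epsilon\omega = \sqrt{2\delta'}$ fixed and $\epsilon\to 0$, $\omega\to\infty$: then $|\dot{\Theta}|\leq\sqrt{2\delta'}$ and the $L^2$ bound holds with constant $\delta'$, yet for $\varphi=0$ the integral is $O(\epsilon^2\tau)\to 0$. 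This is realisable by the system: a bounded velocity $v(s) = \alpha\cos(\omega s)\,S x_i/\Vert x_i\Vert$ gives $\dot{\Theta} = \alpha\cos(\omega s)/\Vert x_i\Vert$, so the lemma's persistence-of-excitation hypothesis holds with fixed $\delta$ while the radial direction of the Gramian degenerates as $\omega$ grows; concatenating slowly increasing frequencies yields a single smooth, bounded trajectory along which the smallest Gramian eigenvalue is not bounded away from zero. What is missing is a bound on $\ddot{\Theta}$ --- equivalently on $\dot{v}$ and on the observer rates entering $\mr{v}_i$ --- which lets you extract a subinterval on which $\dot{\Theta}$ keeps a fixed sign and magnitude, so that $\Theta$ genuinely sweeps an arc and $\int\sin^2(\Theta-\varphi)$ is bounded below for every $\varphi$. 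Such derivative-boundedness assumptions are exactly the regularity carried by Theorem 3.1 of \cite{2018_Hamel_TAC}, whose criterion is built from $C$, $CA+\dot{C}$, $\ldots$ assumed bounded; your argument closes if you add them explicitly, but it cannot be completed from the lemma's stated hypotheses alone.
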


\begin{proof}
The proof is based on Theorem 3.1 of \cite{2018_Hamel_TAC}.
It has been omitted from this paper to save space.
\end{proof}

Assuming the conditions of Lemma \ref{lem:example_observability} are met, let $\Sigma \in \Sym_+(2n)$ be the solution to the Riccati equation \eqref{eq:eqf_riccati} with state gain matrix $P \in \Sym_+(2n)$ and output gain matrix $Q \in \Sym_+(2n)$.
In this example, we may choose $\Sigma$ to have block diagonal structure $\Sigma = \diag(\Sigma_i)$, $i=1,...,n$.
This requires $P$ and $Q$ also have block diagonal structure.

The EqF observer equations \eqref{eq:eqf_group_observer} specialises to
\begin{align*}
    \ddt (\hat{\theta}_i, \hat{a}_i) &= \td L_{(\hat{\theta}_i, \hat{a}_i)} \Lambda(\phi_{(\hat{\theta}_i, \hat{a}_i)}(\mr{x}_i), v_i) - \td R_{(\hat{\theta}_i, \hat{a}_i)} \Delta, \\
    \dot{\hat{\theta}}_i &= \frac{\hat{x}_i^\top S v_i}{\hat{x}_i^\top \hat{x}_i} - \Delta_{\hat{\theta}_i}, \\
    \dot{\hat{a}}_i &= - \hat{a}_i \frac{\hat{x}_i^\top v_i}{\hat{x}_i^\top \hat{x}_i} - \hat{a}_i \Delta_{\hat{a}_i},
\end{align*}
where the innovation $\Delta$ comes from the specialisation of \eqref{eq:eqf_delta_innovation},
\begin{align*}
    \Delta &= -\tD \phi_{\mr{x}_i}(\id)^{-1} \tD\varepsilon^{-1}(0) \Sigma C_t^\top Q^{-1} (h(x_i) - h(\hat{x}_i)), \\
    &= \diag(\Delta_i), \\
    \Delta_i &= -\begin{pmatrix}
        {\mr{x}_i}^\top S \\ - {\mr{x}_i}^\top
    \end{pmatrix}
    \Sigma_i
    \left( I_2 - \frac{\mr{x}_i {\mr{x}_i}^\top}{{\mr{x}_i}^\top \mr{x}_i} \right) \frac{R(\theta_i)}{\Vert \mr{x}_i \Vert }
    Q^{-1}_i (y_i - \hat{y}_i),
\end{align*}
where $y_i$ and $\hat{y}_i$ denote the $i$th measured and estimated measurements, respectively.




\subsection{Simulation Results}


To verify the observer design for this example, we performed a simulation of an omni-directional camera moving in $\R^2$ with a velocity given by $v = (2\cos(2t), 0)$, where $t$ is the simulation time in seconds, observing 4 stationary landmarks.
The landmarks' initial positions relative to the camera were drawn from a uniform random distribution within $[-0.5, 0.5] \times [1.0, 2.0]$.
All landmarks are observed for all time.

The origin coordinates $\mr{x}_i$ of the observer's landmark estimates were chosen by adding a random offset drawn from a uniform distribution within $[-1,-1]\times[1,1]$ to the true landmark positions.
The observer gain matrices were given a block structure so they could be decoupled, $P = \diag(P_i), Q = \diag(Q_i)$.
The values for the state and output gain matrices were set to $P_i = 0.02^2 I_2$ and $Q_i = 0.01^2 I_2$, respectively.
The initial value of the Riccati matrices were set to $\Sigma_i = 4^2 I_2$.
The observer equations were implemented with a time step of $dt = 0.01$ using Euler integration.

The results of the simulation are presented as Lyapunov function evolutions in Figure \ref{fig:sim_lyapunov}.
For each landmark $i$, the Lyapunov value is defined by
\begin{align*}
    l_i := (e_i-\mr{x}_i)^\top \Sigma_i^{-1} (e_i-\mr{x}_i),
\end{align*}
where $(e_i) = \phi_{(\hat{\theta}_i, \hat{a}_i)^{-1}}(x_i)$ are the state errors, and $\Sigma_i$ are the Riccati solution blocks corresponding to each individual landmark.

\begin{figure}[!htb]
    \centering
    \includegraphics[width=0.75\linewidth]{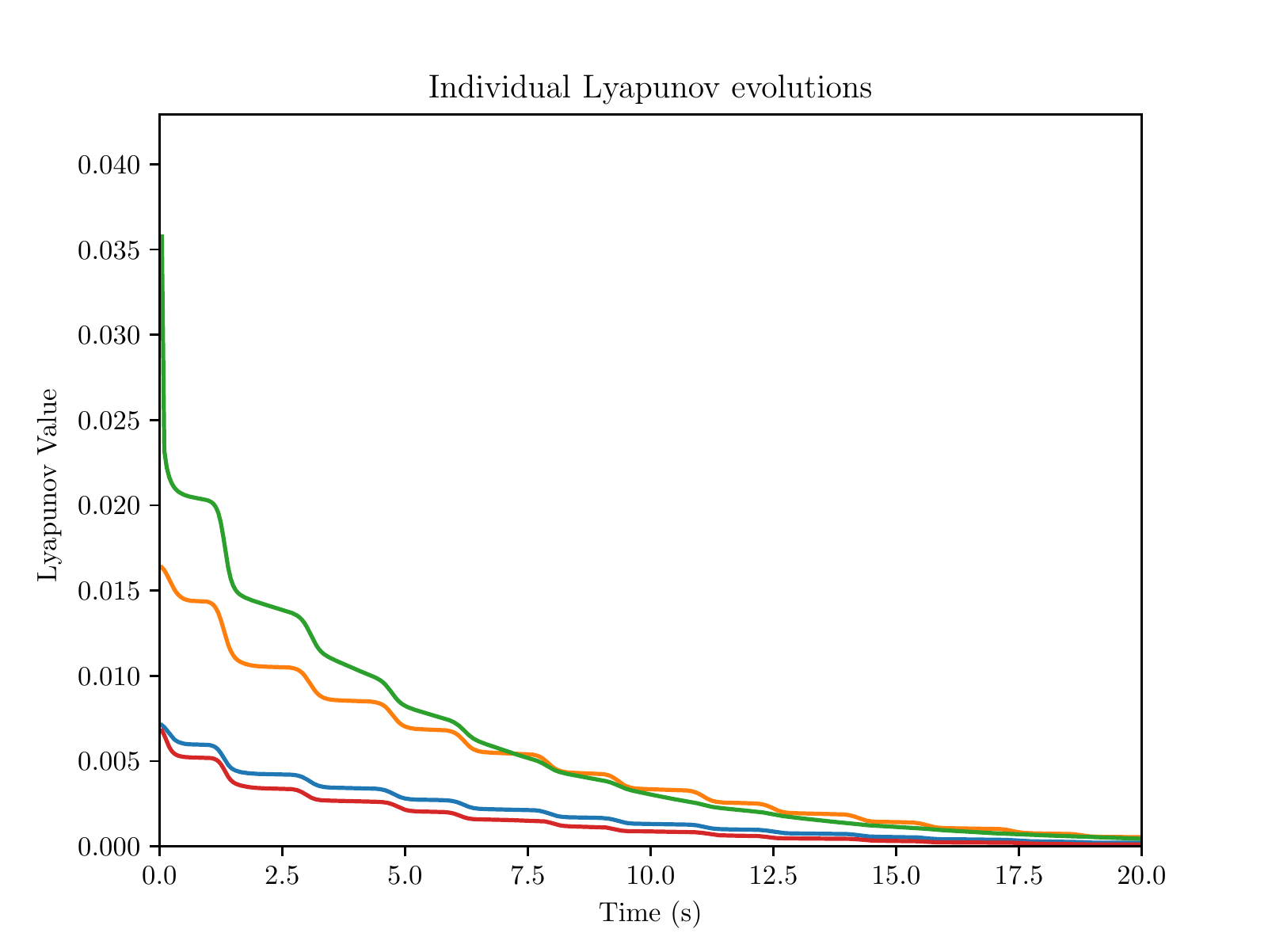}
    \caption{The value of the Lyapunov function for the linearised error system, for four different points.}
    \label{fig:sim_lyapunov}
\end{figure}

The Figure \ref{fig:sim_lyapunov} shows that the local Lyapunov functions converge for all four landmarks.
Observe that, in Figure \ref{fig:sim_lyapunov}, the convergence of the Lyapunov functions speeds up and slows down according to a pattern that repeats every $\pi/2$ seconds.
This is due to the observability of the system, which scales with the square of the velocity as suggested by Lemma \ref{lem:example_observability}, and therefore has a period of $\pi$ seconds.

\subsection{Discussion}

The example system shown is of interest for several reasons.
First, the system is defined on a homogeneous space of a Lie group rather than on a Lie group itself, necessitating the development of a lifted system to apply equivariant observer design methods.
Second, since the original system failed to be equivariant it required a velocity extension to become equivariant.
The existence of such an extension and its construction are novel contributions of this paper.
Finally, the lifted system was shown to be equivariant, but it is not invariant nor group affine, precluding the use of existing observer design methodologies \cite{RM_2013_Mahony_nolcos, 2017_Barrau_tac}.
To the authors' understanding the only general equivariant filter design that can be applied in such a case as this is the proposed EqF.



\section{Conclusion}

Equivariant systems are those with compatible Lie group symmetries of their input and state spaces.
A key contribution of this paper is to show that any system with a state symmetry can be extended to an equivariant system with a compatible input symmetry.
The Equivariant Filter (EqF) has been proposed as a general observer design for equivariant systems, and can be applied to a broader class of systems than designs considered in previous work.
The example of 2D derotated ego-centric visual SLAM is used to detail the system extension procedure and the implementation of the EqF.
A simulation of the EqF for this system demonstrates both the convergence of the EqF estimates as well as the relationship between the observability of the linearised EqF system and the rate of decrease in the observer error.

\section*{Acknowledgment}

This research was supported by the Australian Research Council
through the ``Australian Centre of Excellence for Robotic Vision'' CE140100016.

\bibliographystyle{plain}
\bibliography{references}

\end{document}